  \providecommand\BibTeX{{%
    \normalfont B\kern-0.5em{\scshape i\kern-0.25em b}\kern-0.8em\TeX}}}
\DeclareMathOperator*{\argmin}{arg\,min}
\newcommand{\method}{\texttt{SHARE}}
\newcommand{\algName}{\texttt{STABLE}}
\begin{document}

%%
%% The "title" command has an optional parameter,
%% allowing the author to define a "short title" to be used in page headers.
\title{Identifying and Mitigating Instability in Embeddings of the Degenerate Core}

%%
%% The "author" command and its associated commands are used to define
%% the authors and their affiliations.
%% Of note is the shared affiliation of the first two authors, and the
%% "authornote" and "authornotemark" commands
%% used to denote shared contribution to the research.
\author{David Liu}
\email{liu.davi@northeastern.edu}
\orcid{0000-0002-2129-447X}

\affiliation{%
  \institution{Northeastern University}
  \city{Boston}
  \state{MA}
  \country{USA}}
  
\author{Tina Eliassi-Rad}
\email{tina@eliassi.org}
\orcid{0000-0002-1892-1188}

\affiliation{%
  \institution{Northeastern University}
  \city{Boston}
  \state{MA}
  \country{USA}}

%%
%% By default, the full list of authors will be used in the page
%% headers. Often, this list is too long, and will overlap
%% other information printed in the page headers. This command allows
%% the author to define a more concise list
%% of authors' names for this purpose.
\renewcommand{\shortauthors}{Liu and Eliassi-Rad}

%%
%% The abstract is a short summary of the work to be presented in the
%% article.
\begin{abstract}
  Are the embeddings of a graph's degenerate core stable? What happens to the embeddings of nodes in the degenerate core as we systematically remove periphery nodes (by repeated peeling off $k$-cores)? We discover three patterns w.r.t. instability in degenerate-core embeddings across a variety of popular graph embedding algorithms and datasets. We use regression to quantify the change point in graph embedding stability. Furthermore, we present the   \texttt{STABLE} algorithm, which takes an existing graph embedding algorithm and makes it stable. We show the effectiveness of \texttt{STABLE} in terms of making the degenerate-core embedding stable and still producing state-of-the-art link prediction performance.
\end{abstract}

%%
%% The code below is generated by the tool at http://dl.acm.org/ccs.cfm.
% \begin{CCSXML}
% <ccs2012>
% %<concept>
% %<concept_id>10003033.10003068</concept_id>
% %<concept_desc>Networks~Network algorithms</concept_desc>
% %<concept_significance>500</concept_significance>
% %</concept>
% <concept>
% <concept_id>10003752.10003809.10003635</concept_id>
% <concept_desc>Theory of computation~Graph algorithms analysis</concept_desc>
% <concept_significance>500</concept_significance>
% </concept>
% <concept>
% <concept_id>10010147.10010341.10010346.10010348</concept_id>
% <concept_desc>Computing methodologies~Network science</concept_desc>
% <concept_significance>500</concept_significance>
% </concept>
% <concept>
% <concept_id>10002951.10003227.10003351</concept_id>
% <concept_desc>Information systems~Data mining</concept_desc>
% <concept_significance>500</concept_significance>
% </concept>
% </ccs2012>
% \end{CCSXML}

% \ccsdesc[500]{Computing methodologies~Network Science}
% %\ccsdesc[100]{Networks~Network algorithms}
% \ccsdesc[500]{Information systems~Data mining}
% \ccsdesc[500]{Theory of computation~Graph algorithms analysis}

%%
%% Keywords. The author(s) should pick words that accurately describe
%% the work being presented. Separate the keywords with commas.
% \keywords{datasets, neural networks, gaze detection, text tagging}
\keywords{Graph embedding, stability, degenerate core.}

%%
%% This command processes the author and affiliation and title
%% information and builds the first part of the formatted document.
\maketitle

\section{Introduction}
Previous work has presented varied evidence for the effectiveness of graph (a.k.a.~node) embedding algorithms. For instance, while some suggest that graph embeddings improve performance on link prediction and node classification, others have shown that basic heuristics can outperform graph embeddings in community detection~\cite{embedding-communities}. Other work has shown that the low dimensionality of embeddings prevents them from capturing the triangle structure of real-world networks~\cite{Seshadhri}. In this work, we examine the stability of graph embeddings as a means for better understanding the information they capture and their utility in different contexts. 

We measure the stability of the graph's degenerate core (i.e., its $k$-core with maximum $k$) as outer $k$-shells (i.e., the ``periphery") are iteratively shaved off. The $k$-core of an undirected graph $G$ is the maximal subgraph of $G$ in which every node is adjacent to at least $k$ nodes. A common approach to  understanding stability %and explainability 
 is to measure changes to algorithmic output due to input perturbations. K-core analysis gives us a principled mechanism for changing graphs. In analyzing the embedded $k$-cores, we  ask whether the embeddings capture the degenerate core's structure, and if/how its embedding changes as each shell is removed. For example, previous work showed that degenerate cores are generally not cliques but contain community structure~\cite{corescope}. We study whether such patterns appear in the embeddings of the degenerate cores as $k$-shell are removed. We also investigate whether the stability of the graph's degenerate-core embedding varies by graph type (such as protein-protein interaction, social network, etc.), or varies by graph embedding algorithm (such as matrix factorization methods or skip-gram methods).

It is important to evaluate the stability of embeddings of nodes in the degenerate-core (a.k.a.~``degenerate-core embeddings'') because dense subgraphs are the ``heart" of the graph. Nodes in the degenerate core are often the most influential spreaders. In marketing applications, the removal of a dense core node can trigger a cascade of node removals \cite{liu_core-like_2015,anchors}. Yet, as important as the core nodes are, previous studies on online activism have shown that core nodes are also dependent on periphery nodes to amplify messages originating from the core nodes \cite{critical-periphery}. In this study, we assess the importance of the periphery nodes in the stability of the core node embeddings -- specifically, the nodes in the degenerate core. 

As we present in this work, the embedding of the nodes in the degenerate core are not stable (as in they do not persist as the periphery is removed). Thus, graph embeddings are relative and not absolute. These possible perturbations are a concern because real-world networks are noisy \cite{noisy_networks,noisy_networks_bayesian} and dynamic \cite{dynamic-embedding-survey}. As such, unstable embeddings should push us to place less faith in any individual set of graph embeddings. Instead, we must better specify the noise in the network to qualify the quality of the graph embedding.

Our main contributions are as follows:
\begin{enumerate}
    \item Across multiple categories of graphs and embedding algorithms, we discover three patterns of instability in embeddings of nodes in the degenerate core. In the process we introduce a method called \method{} for measuring the stability of degenerate-core embeddings. %of real and synthetic networks. 
    \item We show that the instability in degenerate-core embeddings is correlated with increases in graph edge density when the periphery is removed. The correlation with density surpasses correlations with other graph properties, notably graph size, as the periphery is iteratively removed.
    \item We present an algorithm \algName{} for generating core-stable graph embeddings. Our algorithm is flexible enough to augment any existing graph embedding algorithm. We show that when instantiated with Laplacian Eigenmaps~\cite{lapeigenmap} and LINE~\cite{LINE}, our algorithm yields embeddings that preserve downstream utility while increasing stability.
\end{enumerate}
% Unanswered question: Do these patterns hold as we randomly remove node and/or edges?
\section{Defining Embedding Instability}\label{sec:method}
We analyze the stability of graph  embedding algorithms by tracking the embedding of the degenerate core (the $k$-core with maximal $k$), as we progressively shave k-shells.\footnote{Our analysis pertains to undirected graphs without self-loops.} After each shell is removed, we re-embed the remaining subgraph; we name this method SHave-And-Re-Embed, or ``\method{}". Figure \ref{fig:shaving-method-example} illustrates the application of \method{} on the Zachary Karate Club graph. In the figure, the degenerate core ($k=4$) is highlighted in blue in the top row, and the bottom row plots the Node2Vec embedding for the remaining subgraph, where the corresponding degenerate-core embeddings are also plotted in blue. For the Karate Club graph, the relative proximities of the core embeddings do not change until the degenerate core is embedded in isolation, which we call the \emph{isolated embeddings}.
\begin{figure}
    % Generated in Google Slides for GraphXAI Update. Original figure from Figures.ipynb
    \centering
    \includegraphics[width=\linewidth]{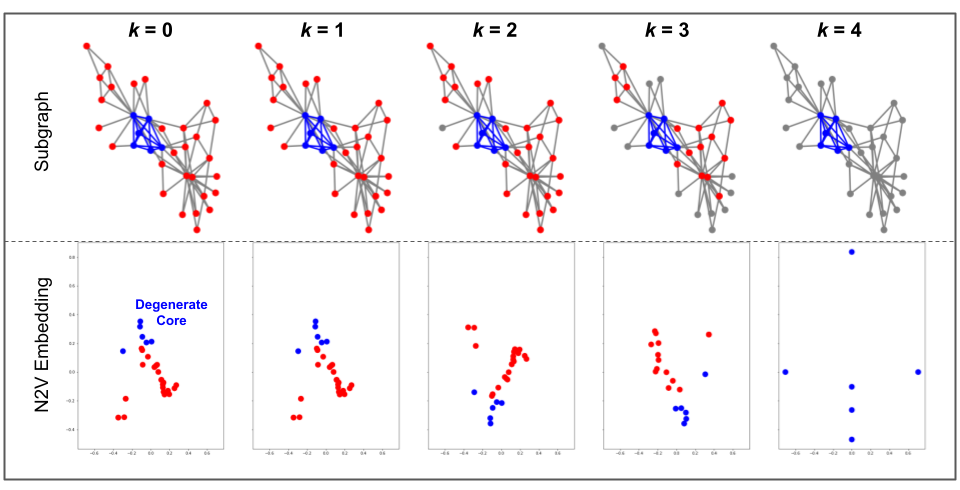}
    \caption{To measure the stability of degenerate-core embeddings, we use a method that we call SHave-And-Re-Embed, or ``\method{}". In this figure, we apply \method{} to the Zachary Karate Club graph~\cite{zachary}. At each iteration, we shave off the outermost k-shell (top row) and re-embed the remaining subgraph (bottom row). We then analyze the stability of the degenerate-core embeddings. Removed nodes are shown in grey, remaining subgraph nodes in red, and degenerate-core nodes in blue. The Node2Vec embedding plots all share the same scale. The embeddings of the degenerate core vary widely as the periphery is removed.}
    \label{fig:shaving-method-example}
\end{figure}

We define stability as the property of being resilient to perturbation, a definition of stability that is common in the complex networks literature \cite{Stability_Democracy}. This definition contrasts stability with robustness, which is an insensitivity to microscopic changes across different settings or environments. In the case of core embeddings, we define stability as the resilience of the proximities of degenerate core embeddings when the periphery is perturbed. %On the other hand, the results are robust if the stability patterns we observe are invariant to graph type or embedding algorithm.

To quantify the stability of degenerate-core embeddings, we use \method{} to measure the evolution of the distribution of pairwise distances in the embedded space, which we call the \emph{degenerate-core pairwise distribution}. Specifically, \method{} begins with the entire graph ($k=0$), embeds the nodes and calculates the distribution of pairwise %Euclidean
distances among the embeddings for the degenerate-core nodes. Next, \method{} takes the $k=1$ core, re-embeds the nodes in the subgraph, and re-calculates the pairwise distribution for the degenerate-core nodes using the updated embeddings. If the embeddings are stable, the pairwise distributions should not vary as the $k$-shells are removed. The advantage of using the pairwise distribution is that it captures the relative geometric relationships among the embeddings as opposed to the absolute positions in the embedding space. For instance, in Figure \ref{fig:shaving-method-example}, the embeddings invert after shaving the $k=1$ shell, nevertheless the relative distances among embeddings are largely unchanged. An alternative measure of stability would be the Frobenius norm of the difference in weighted adjacency matrices \cite{Dyngem}; however, we found that the pairwise  distribution provides a more granular measure of stability as opposed to a single norm value. After calculating the degenerate-core pairwise distribution at each $k$, \method{} measures the distance among the distributions with the Earth Mover's Distance (EMD). 

We define \textit{instability} ($\Delta_k$) at core $k$ as the increase in EMD relative to the original graph, where $D_k$ is the degenerate-core pairwise distribution for the $k$-core. Table \ref{tab:notation} summarizes our notation. 
\begin{equation}\label{eqn:def-instability}
   \textrm{Instability }  \Delta_{k} = \text{EMD}\left(D_k, D_0\right) - \text{EMD}\left(D_{k-1}, D_0\right)
\end{equation}
\begin{table}[ht]
    \centering
    \begin{tabular}{| p{0.15\linewidth} | p{0.75\linewidth} | }
        \hline
         Symbol &  Meaning \\ \hline
         $n,m$ & number of vertices, edges\\
         $C$ & number of connected components \\
         $k_{\text{max}}$ & degeneracy of a graph\\
         $d$ & embedding dimension\\
         $D_k$ & the degenerate-core pairwise distribution for $k$-core $k$. See Sec. \ref{sec:method}\\
         $\Delta_k$ & core instability for the $k$-core. See Eqn. \ref{eqn:def-instability}.\\
         $w_{ij}$ & weight of edge $\{i, j\}$ \\
         $W_S$ & adjacency matrix for subgraph induced by a set of nodes $S$\\
         $\mathcal{D}$ & set of nodes in the degenerate core\\
         $\pmb{u}_i$ & $d$-dimensional embedding vector for node $i$\\
         $\pmb{Y}$ & $n \times d$ matrix containing all of the node embeddings where row $i$ is $\pmb{u}_i^T$\\
         $\mathcal{L}_b, \mathcal{L}_s$ & base and stability loss functions\\
         $\alpha, \beta, \gamma$ & regularization hyperparameters\\
         $n_b$ & number of training batches\\
         $\eta$ & learning rate\\
        \hline
    \end{tabular}
    \caption{Notations used in this paper.}
    \label{tab:notation}
\end{table}
% For our analysis we also use the link entropy measure introduced in Liu et al.\cite{liu_core-like_2015}. The link entropy for the degenerate core is defined as:
% \begin{equation}
% \label{eqn:link-entropy}
%     H_{k_{max}} = -\frac{1}{\ln L}\sum_{k_s = 1}^{k_{max}}r_{k_{max}, k_s}\ln r_{k_{max}, k_s}
% \end{equation}
% Where $r_{k_{max}, k_s}$ is the average link connection strength between a degenerate-core node and nodes in the $k_s$ shell. Liu et al. show that true cores have high link entropy because the degenerate core nodes link with nodes in the outer shells, an important characteristic of influential spreaders in a network. We similarly use the link-entropy measure to quantify the degenerate-core's connection with the periphery.

\subsection{Graph Embedding Algorithms and  Datasets}
\label{sec:algs_and_graphs}
We ran the proposed stability analysis using a combination of graph embedding algorithms---namely, HOPE \cite{HOPE}, Laplacian Eigenmaps \cite{lapeigenmap}, Node2Vec \cite{node2vec}, SDNE \cite{SDNE}, Hyperbolic GCN (HGCN) \cite{HGCN}, and PCA as a baseline. We picked these graph embedding algorithms because they span the taxonomy provide in Chami et al.~\cite{taxonomy}. To find instability patterns, we experimented on vary of graph datasets (listed in Table \ref{tab:snap_datasets}). The datasets are primarily collected from SNAP \cite{snapnets}, with the exception of Wikipedia \cite{wiki}, Autonomous Systems (AS) \cite{autonomous-systems}, and the synthetic graphs, which were generated to be of similar size as the real-world graphs.
\begin{table}[ht]
    \centering
    \begin{tabular}{| l | l | r | r | r || r | r |}\hline
         \textbf{Graph} & \textbf{Type} & $n$ & $m$ & $C$ & $k_{\text{max}}$ & \textbf{\% in $\mathcal{D}$}\\ \hline
         Wikipedia & Language & 4.8K & 185K & 1 & 49 & 3.1\%\\
         Facebook & Social & 4.0K & 88K& 1 & 115 & 3.9\%\\
         PPI & Biological & 3.9K & 77K & 35 & 29 & 2.8\%\\
         ca-HepTh & Citation & 9.9K & 26K & 429 & 31 & 0.3\%\\
         LastFM & Social & 7.6K & 28K & 1 & 20 & 0.6\%\\ 
         AS & Internet & 23K & 48K & 1 & 25 & 0.3\%\\ \hline
         ER ($p = 2e^{-3}$) & Synthetic & 5K & 25K & 1 & 7 & 67\%\\
         ER ($p = 4e^{-3}$) & Synthetic & 5K & 50K & 1 & 14 & 87\%\\
         BA ($m=5$) & Synthetic & 5K & 25K & 1 & 5 & 100\%\\
         BA ($m=10$) & Synthetic & 5K & 50K & 1 & 10 & 100\%\\
         BTER (PA) & Synthetic & 5K & 25K & 1 & 11 & 1.5\%\\
         BTER (Arb.) & Synthetic & 4.8K & 35K & 314 & 51 & 3.3\%\\\hline
    \end{tabular}
    \caption{Graph datasets used in our study. \% in $\mathcal{D}$ refers to the percentage of nodes in the degenerate core. ER, BA, and BTER are short for Erd\"{o}s-R\'{e}nyi~\cite{er}, Barab\'{a}si-Albert~\cite{BA}, and Block Two-Level Erd\"{o}s-R\'{e}nyi~\cite{bter} random graphs, respectively. PA is for a degree distribution that exhibits preferential attachment. Arb.~is for an arbitrary degree distribution. A description of the synthetic graphs is in Appendix \ref{sec:synthetic-network}. }
    \label{tab:snap_datasets}
\end{table}
% \begin{table}[ht]
%     \centering
%     \begin{tabular}{| l | l | r | r | r || r | r |}\hline
%          \textbf{Graph} & \textbf{Type} & \textbf{N} & \textbf{E} & \textbf{C} & \textbf{Deg.} & \textbf{\%}\\ \hline
%          Wikipedia & Language & 4,777 & 184,812 & 1 & 49 & 3.10\%\\
%          Facebook & Social & 4,039 & 88,234 & 1 & 115 & 3.91\%\\
%          PPI & Biological & 3,890 & 76,584 & 35 & 29 & 2.8\%\\
%          ca-HepTh & Citation & 9,877 & 25,998 & 429 & 31 & 0.32\%\\
%          LastFM & Social & 7,624 & 27,806 & 1 & 20 & 0.62\%\\ 
%          AS & Internet & 22,963 & 48,436 & 1 & 25 & 0.31\%\\ \hline
%          ER ($p = 2e^{-3}$) & Synthetic & 5,000 & 24,865 & 1 & 7 & 66.56\%\\
%          ER ($p = 4e^{-3}$) & Synthetic & 5,000 & 49,892 & 1 & 14 & 86.52\%\\
%          BA ($m=5$) & Synthetic & 5,000 & 24,975 & 1 & 5 & 100\%\\
%          BA ($m=10$) & Synthetic & 5,000 & 49,900 & 1 & 10 & 100\%\\
%          BTER (BA) & Synthetic & 4,992 & 24,810 & 1 & 11 & 1.46\%\\
%          BTER (Arb.) & Synthetic & 4,754 & 34,552 & 314 & 51 & 3.26\%\\\hline
%     \end{tabular}
%     \caption{Datasets}
%     \label{tab:snap_datasets}
% \end{table}
Figure \ref{fig:datasets} shows that the selected graphs span a variety of k-core structures, as defined by the graph degeneracy and the maximum-core link entropy~\cite{liu_core-like_2015}, which is high when the degenerate core is well-connected with the outer shells.
\begin{figure}
    \centering
    \includegraphics[width=0.9\linewidth]{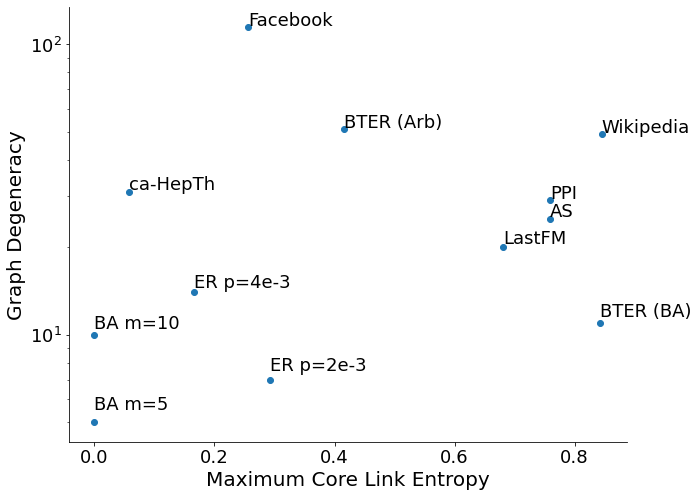}
    \caption{We chose graph datasets that have diverse $k$-core structures. Each graph is plotted based on its degeneracy vs. its maximum-core link entropy ($\in [0,1]$).  Liu et al. \cite{liu_core-like_2015} define maximum-core link entropy, where higher values correspond with degenerate cores that are well-connected with the outer shells; and graphs with low maximum-core link entropy have degenerate cores isolated from the rest of the graph. The Y-axis is the graph's degeneracy (the largest $k$ in the graph's $k$-cores).}
    \label{fig:datasets}
\end{figure}
\section{Unstable Degenerate-Core Embeddings}
We provide our results analyzing the stability of degenerate-core embeddings in three parts. First, we analyze the extreme case for matrix factorization (see Sec. \ref{sec:embed-cliques}) and skip-gram based graph embedding algorithms (Appendix \ref{sec:n2v_clique}) and show that they generate arbitrary embeddings. Second, we show that as the periphery $k$-shells are removed, the evolution of degenerate-core embeddings follows three patterns across the various graph types and graph embedding algorithms. Third, we show that points of embedding instability are correlated with increases in the subgraph density.
\subsection{Embedding Cliques}\label{sec:embed-cliques}
To gain an intuition of the stability of degenerate-core embeddings, we begin by examining the extreme case: a clique (i.e., a graph with maximum density) embedded in isolation (without periphery). As a case study, let us examine how Laplacian Eigenmaps \cite{lapeigenmap} embeds cliques.\footnote{For an empirical analysis of Node2Vec clique embeddings, see Appendix \ref{sec:n2v_clique}.} At a high level, graph embedding algorithms embed similar nodes in the graph space closer to each other in the embedding space. Similarity in the graph space has multiple definitions. Two popular definitions are: (i) nodes $u$ and $v$ are similar if the two are neighbors \cite{lapeigenmap} (ii) node $v$ is similar to $u$ if $v$ appears in a random walk starting at $u$ \cite{node2vec}. With cliques, all nodes are connected to each other and are thus equally similar. 

Laplacian Eigenmaps embeddings are based on the eigenvectors of the random-walk normalized graph Laplacian. For a clique of $n$ nodes, the random-walk normalized Laplacian ($L_{rw}$) is as follows:
\begin{align}
    L_{rw} &= D^{-1}L\\
    &= D^{-1}(D - A)\\
    &= \frac{D - A}{n-1}\\
    &= \begin{bmatrix}
    1 & -\frac{1}{n-1}\\
    -\frac{1}{n-1}& 1\\
    \end{bmatrix}
\end{align}
Where $D$ is the degree matrix; $L$ is the graph Laplacian; and $A$ is the adjacency matrix. The random-walk normalized Laplacian is a matrix in which all diagonal entries are $1$ and all off-diagonal entries are $-\frac{1}{n-1}$.  The eigenvalues for the above normalized Laplacian matrix are characterized in Theorem \ref{theorem:clique-lap-eigenvalues}, the proof of which is in Appendix \ref{sec:embed-clique}.  
\begin{theorem}\label{theorem:clique-lap-eigenvalues}
The random-walk normalized Laplacian for a clique of size $n$ has two eigenvalues: zero, with multiplicity one, and $1 + \frac{1}{n-1}$ with multiplicity $n-1$.
\end{theorem}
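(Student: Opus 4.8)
The plan is to recognize that the clique's random-walk normalized Laplacian is a rank-one perturbation of a scalar matrix, whose spectrum follows immediately from that of the all-ones matrix. Writing $J$ for the $n \times n$ all-ones matrix and $I$ for the identity, the adjacency matrix of a clique is $A = J - I$ (no self-loops), every degree equals $n-1$ so $D = (n-1)I$, and hence
\[
L_{rw} = D^{-1}(D-A) = \frac{1}{n-1}\bigl(nI - J\bigr) = \frac{n}{n-1}I - \frac{1}{n-1}J .
\]
This reproduces exactly the matrix stated just before the theorem, whose diagonal entries are $1$ and off-diagonal entries are $-\frac{1}{n-1}$.

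Next I would invoke the well-known spectrum of $J$. The all-ones vector $\mathbf{1}$ satisfies $J\mathbf{1} = n\mathbf{1}$, giving eigenvalue $n$ with multiplicity one, while every vector orthogonal to $\mathbf{1}$ lies in the kernel of $J$, so $0$ is an eigenvalue with multiplicity $n-1$. Since $L_{rw}$ is an affine function of $J$, it shares these eigenvectors, and I can read off its eigenvalues by substitution. For the eigenvector $\mathbf{1}$ the eigenvalue is $\frac{n}{n-1} - \frac{1}{n-1}\cdot n = 0$; for each of the $n-1$ eigenvectors orthogonal to $\mathbf{1}$ the eigenvalue is $\frac{n}{n-1} - \frac{1}{n-1}\cdot 0 = \frac{n}{n-1}$.

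Finally I would simplify $\frac{n}{n-1} = 1 + \frac{1}{n-1}$ and collect multiplicities, concluding that $L_{rw}$ has eigenvalue $0$ with multiplicity one and eigenvalue $1 + \frac{1}{n-1}$ with multiplicity $n-1$, as claimed. There is no substantial obstacle here; the only points needing care are confirming that the eigenspaces of $J$ are inherited exactly by $L_{rw}$ (immediate because $L_{rw} = aI + bJ$ commutes with $J$) and that the multiplicities partition correctly as $1 + (n-1) = n$. An alternative route that avoids $J$ is to exhibit eigenvectors directly --- $\mathbf{1}$ for the zero eigenvalue and the $n-1$ difference vectors $e_i - e_{i+1}$ for the repeated eigenvalue --- and then verify their linear independence; but the $J$-based argument is cleaner and makes the multiplicity count transparent.
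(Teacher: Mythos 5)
Your proof is correct and follows essentially the same route as the paper's: both express $L_{rw}$ as an affine combination of the identity and the all-ones matrix and read off the spectrum from the known eigenvalues of the all-ones matrix ($n$ with multiplicity one, $0$ with multiplicity $n-1$). Your version is, if anything, slightly more careful, since you explicitly note that the eigenvectors are shared because $L_{rw}=aI+bJ$, a point the paper's lemma-based shift-and-scale argument leaves implicit.
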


The eigenvalue of zero is evident because the clique is a single connected component. The equal non-zero eigenvalues of the clique Laplacian illusrate the arbitrary nature of Laplacian Eigenmaps embeddings for cliques. Laplacian Eigenmaps  assembles the embeddings by taking the eigenvectors corresponding to the $d<n$ smallest eigenvalues, after dropping the smallest eigenvalue. However, because all of the non-zero eigenvalues are equal, the eigenvectors chosen are an arbitary subset of the $n-1$ orthogonal eigenvectors. It is also worth noting that existing methods to determine the optimal embedding dimension $d$ by locating the elbow point in the loss function would fail to find an optimal threshold \cite{gu_principled_2021}. 

% The instability of the Laplacian Eigenmaps embeddings is visualized in Figure \ref{fig:lap_clique}. Here, the component values of the leading (smallest non-zero eigenvector) are visualized, where the x-axis corresponds to the indices of the first eigenvector. Across multiple embedding iterations, the vector takes on arbitrary values which differ with each embedding iteration. The maginitude of the vector, however, is bounded as the Laplacian Eigenmaps algorithm bounds the vectors to have a L2-norm of $\frac{1}{n-1}$.
% \begin{figure}
%     \centering
%     \includegraphics[width=0.7\linewidth]{figs/lap_clique.png}
%     \caption{The order of eigenvectors for the clique Laplacian are unstable. Here we visualize the first eigenvector (corresponding to the smallest non-zero eigenvalue) for the Laplacian of a clique of five nodes. The x-axis are the indices of the eigenvector and the y-axis marks the corresponding component value. Over multiple embeddings, the eigenvector varies greatly.}
%     \label{fig:lap_clique}
% \end{figure}
While the degenerate cores of the real-world graphs studied are not cliques and have more community structure than cliques, several are quite close to being complete graphs. Table \ref{tab:maxcore_completeness} lists the graph completeness of the various real-world degenerate cores studied, where given a degenerate core $G_\mathcal{D}=(V_\mathcal{D}, E_\mathcal{D})$ the degenerate core completeness is $\frac{\lvert E_\mathcal{D}\lvert}{\binom{\lvert V_\mathcal{D}\lvert}{2}}$. For instance, the core for the ca-HepTh citation network is indeed a clique and the core for the Facebook graph has a completeness of $0.898$. Thus, our analysis of the instability of clique embeddings is applicable to the degenerate cores.
\begin{table}[ht]
    \centering
    \begin{tabular}{| l | r | r |}\hline
         Graph &  Degeneracy & Degenerate-Core Completeness\\\hline
         ca-HepTh & 31 & 1.0\\
         Facebook & 115 & 0.898\\
         LastFM & 20 & 0.614\\
         AS & 25 & 0.545\\
         Wikipedia & 49 & 0.526\\
         PPI & 29 & 0.404\\\hline
    \end{tabular}
    \caption{Analyzing the behavior of graph embedding algorithms is useful because the degenerate cores for multiple real-world graphs resemble cliques. The degenerate (max) cores for the real-world graphs were extremely dense and several (Facebook and ca-HepTh) were nearly cliques. Given a degenerate core $G_\mathcal{D}=(V_\mathcal{D}, E_\mathcal{D})$, the degenerate core completeness is $\frac{\lvert E_\mathcal{D}\lvert}{\binom{\lvert V_\mathcal{D}\lvert}{2}}$.}
    \label{tab:maxcore_completeness}
\end{table}
\subsection{Patterns in Stability of Degenerate-Core Structural Representation}
After running \method{} (our degenerate core stability method described in Section \ref{sec:method}) on the 12 graphs and 6 embedding algorithms (described in Section \ref{sec:algs_and_graphs}), we observed the following three patterns.

\vspace*{3pt}
\textbf{Pattern 1: For many graph data, embedding algorithm combinations, the distribution of pairwise distances among degenerate core nodes shifts after removing a specific k-shell, but is stable otherwise.}
\vspace*{3pt}

 We observe that not only does the pairwise distribution change as k-shells are removed but the change often occurs abruptly. In Figure \ref{fig:shift-lastfm}, we show the degenerate-core pairwise distribution for the LastFM graph when embedded with HOPE \cite{HOPE}. For readability in all distance distribution figures (\ref{fig:shift-lastfm}-\ref{fig:unimodal}), we plot the distribution at intervals of $k$. The distinguishing feature is that the distributions for $k=1$ and $k=9$ are quite similar (left-skwed). However, the distribution for $k=13$ differs dramatically; then for $k > 13$, the distribution remains quite similar. This pattern suggests that nodes with coreness $k\in[9,13]$ greatly affect the embedding of the degenerate core ($k=20$). The exact critical shell will be identified in Section \ref{sec:casestudy}.

The $k$ at which the degenerate-core pairwise distribution transforms during the $k$-core shaving process depends on the graph and the embedding dimension. For instance, Figure \ref{fig:shift-facebook} shows the pairwise distribution for the Facebook graph using Laplacian Eigenmaps \cite{lapeigenmap}. When embedding with the entire graph, the degenerate core nodes are all very tightly embedded together, with the pairwise distance distribution concentrated near zero. As shells are removed the distribution flattens. However, the point ($k$) at which the distribution flattens depends on $d$. When $d=10$, the distribution flattens after $k=47$. On the otherhand, when $d=20$, the distribution flattens before $k=47$. 
\begin{figure}
    \centering
    \includegraphics[width=0.95\linewidth]{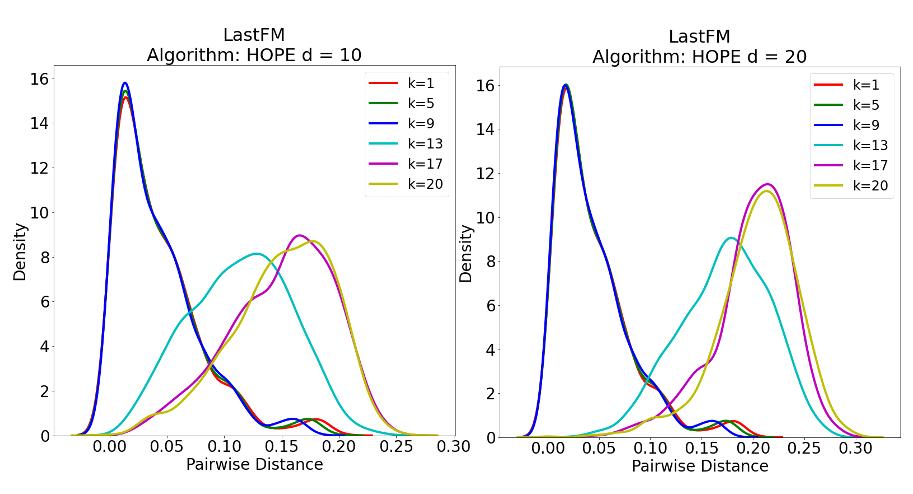}
     \caption{When we embed the LastFM graph with HOPE \cite{HOPE}, we see that the degenerate-core pairwise distribution shifts abruptly between $k=9$ and $k=13$. This holds true for the various embedding dimensions that we tried. For brevity, we only show $d=10$ and $20$. For readability, we have shown the distribution at intervals of $k$.}
    \label{fig:shift-lastfm}
\end{figure}
\begin{figure}
    \centering
    \includegraphics[width=0.95\linewidth]{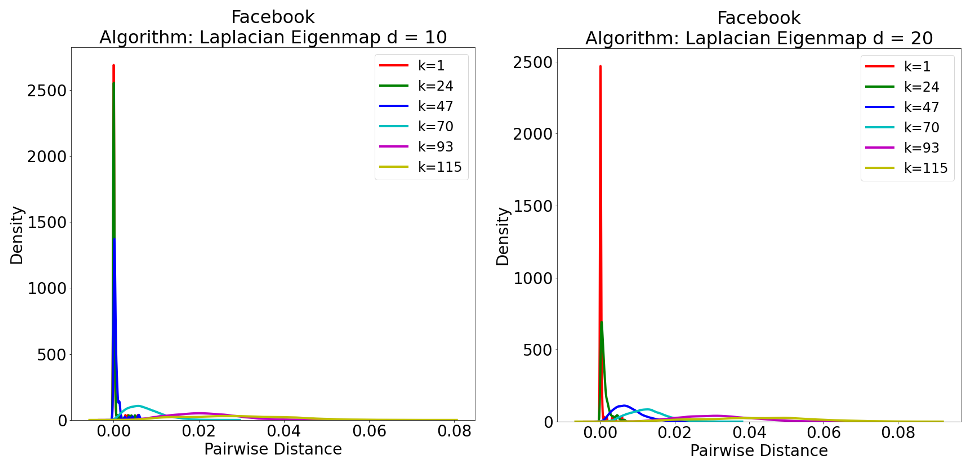}
    \caption{When the degenerate-core pairwise distribution transforms during the shaving process depends on the graph and the embedding dimension. For the Facebook graph, we see the distribution flatten at $k=70$ (cyan) when the embedding dimension $d=10$, but when $d=20$, the distribution flattens earlier in the shaving process at $k=47$ (blue).}
    \label{fig:shift-facebook}
\end{figure}

% \textbf{Pattern 2: With Laplacian Eigenmaps, the degenerate-core embeddings grow further apart from each other as k-shells are removed; on the other hand, the opposite is true for Node2Vec (degenerate-core embeddings are closer after k-shells are removed).}\\

% We found that for two of the embedding algorithms, the pairwise distribution evolved in a predictable pattern as k-shells were removed. For Laplacian Eigenmaps, we found that as k-shells are removed, the degenerate core node embeddings moved further apart across all of the graphs we studied; visually, the pairwise distribution shifted right. On the otherhand, the opposite was true for Node2Vec: as k-shells were removed the degenerate core nodes were embedded more closely to each other. 
% \begin{figure}
% \centering
% \begin{subfigure}{.5\textwidth}
%   \centering
%   \includegraphics[width=.9\linewidth]{figs/ppi-lap.png}
%   \caption{Laplacian Eigenmaps}
%   \label{fig:sub1}
% \end{subfigure}%
% \begin{subfigure}{.5\textwidth}
%   \centering
%   \includegraphics[width=.9\linewidth]{figs/ppi-n2v.png}
%   \caption{Node2Vec}
%   \label{fig:sub2}
% \end{subfigure}
% \caption{For Laplacian Eigenmaps embeddings, as k-shells are removed, the degenerate-core embeddings move further away from each other; on the other hand, for Node2Vec, we see the degenerate-core embeddings move closer to each other. The above figure was generated using the PPI graph and $d=20$ but this pattern persists across all seven graphs.}
% \label{fig:lap-n2v}
% \end{figure}

\vspace*{3pt}
\noindent \textbf{Pattern 2: The degenerate-core embeddings for the ER and BA graphs are stable.}
\vspace*{3pt}

In contrast to the real-world graphs, we found that the embeddings for the ER and BA graphs are stable. Figure \ref{fig:random-stability} shows the degenerate-core pairwise distributions for the Erd\"{o}s-R\'{e}nyi and Barab\'{a}si-Albert graphs. The distributions shown were generated with HOPE \cite{HOPE}, however the pattern holds for Node2Vec \cite{node2vec} and Laplacian Eigenmaps  \cite{lapeigenmap} as well. The stability for these random graphs is likely due to the fact that the degenerate core alone constitutes a large proportion of the entire graph, given the parameters that we selected. For this reason, removing the outer k-shells has less of an impact on the degenerate core.
\begin{figure}
\centering
\begin{subfigure}{0.5\linewidth}
  \centering
  \includegraphics[width=\linewidth]{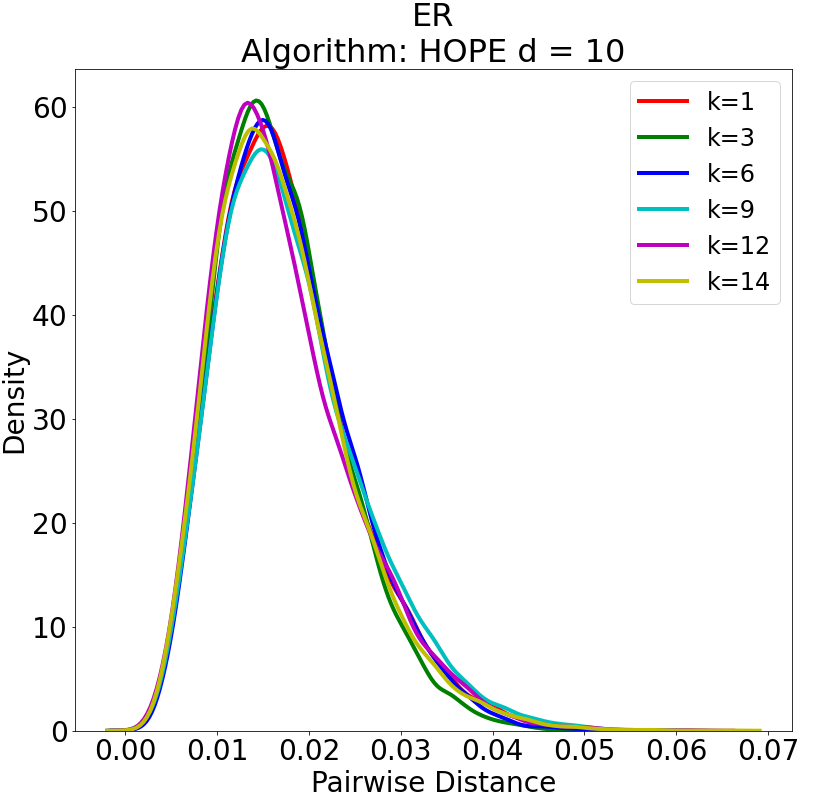}
  \caption{Erd\"{o}s-R\'{e}nyi}
  \label{fig:sub1}
\end{subfigure}%
\begin{subfigure}{0.5\linewidth}
  \centering
  \includegraphics[width=\linewidth]{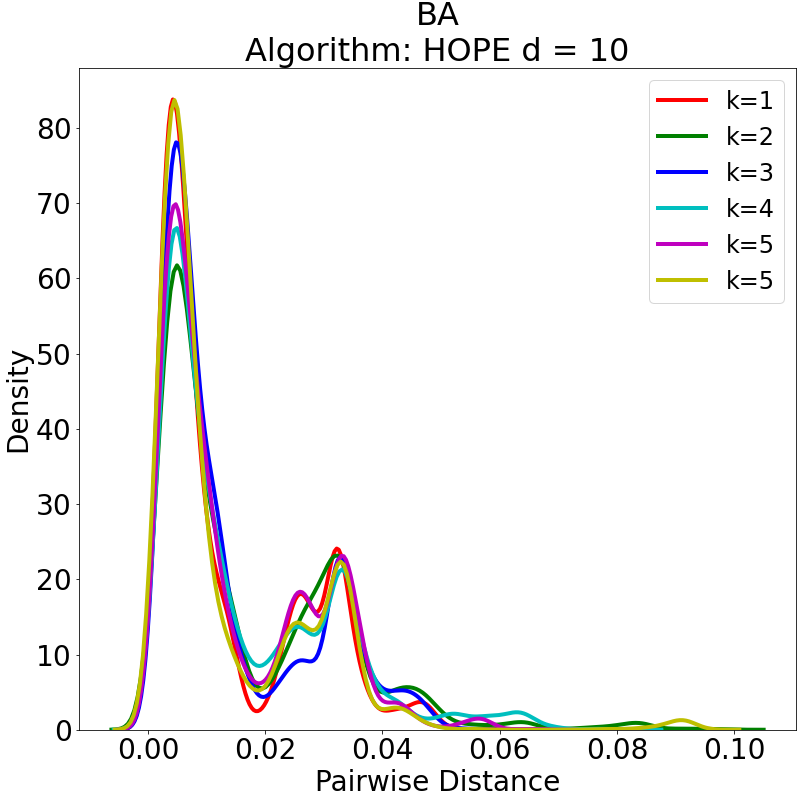}
  \caption{Barab\'{a}si-Albert}
  \label{fig:sub2}
\end{subfigure}
\caption{For Erd\"{o}s-R\'{e}nyi and Barab\'{a}si-Albert graphs, the degenerate core embeddings were stable as the k-shells were removed iteratively. The degenerate-core pairwise distributions above are nearly identical regardless of the subgraph being embedded, and the pattern holds for all six embedding algorithms. This stability is due to the degenerate core constituting a large proportion of the entire graph.}
\label{fig:random-stability}
\end{figure}

\vspace*{3pt}
\noindent \textbf{Pattern 3: As k-shells are removed, the degenerate-core pairwise distribution becomes smoother and unimodal.}
\vspace*{3pt}

We observe that not only does the degenerate-core pairwise distribution shift as k-shells are removed, the distribution also loses modality. In Figure \ref{fig:unimodal}, we show the HOPE embeddings on to the Wikipedia graph. Across all subgraphs, the degenerate-core pairwise distribution is centered close to zero. However, for larger subgraphs, the degenerate-core pairwise distribution is bimodal. When $k=1$, the distribution has a small peak around the distance of $0.6$. For $k>40$, the distribution is nearly unimodal.
\begin{figure}
    \centering
    \includegraphics[scale=0.2]{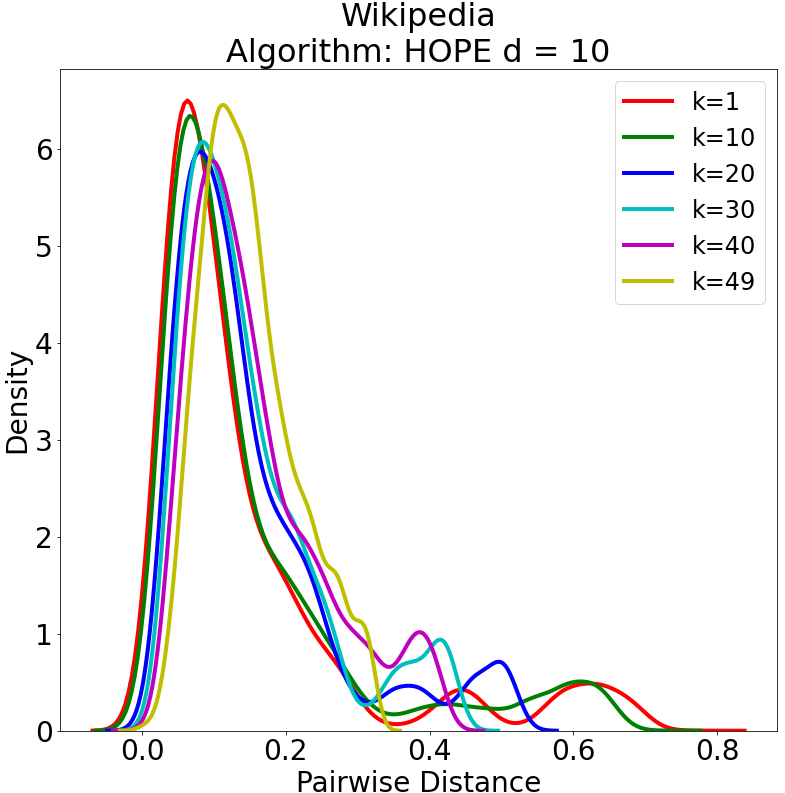}
    %width=0.7\linewidth
    \caption{For the Wikipedia graph, as k-shells are removed, the degenerate-core pairwise distribution transforms from being bimodal to unimodal. We conjecture that the loss in modality is due to the loss in community structure.} %See Figure~\ref{fig:stability_notation}.}
    \label{fig:unimodal}
\end{figure}
\subsection{Significance of the Periphery}
In this section, we examine the causes of the  patterns identified in the previous section. We begin by taking a look at case studies of the Facebook and LastFM graphs. In these two graphs, we see that embedding instability is correlated with subgraph edge-density. We then generalize this result across all graphs by running ordinary least squares regressions that correlate subgraph features with instability. Together, these findings show that the stability patterns found are not simply due to large numbers of nodes being removed from the graph but rather predictable structural changes.

\begin{figure}
    \centering
    \includegraphics[scale=0.25]{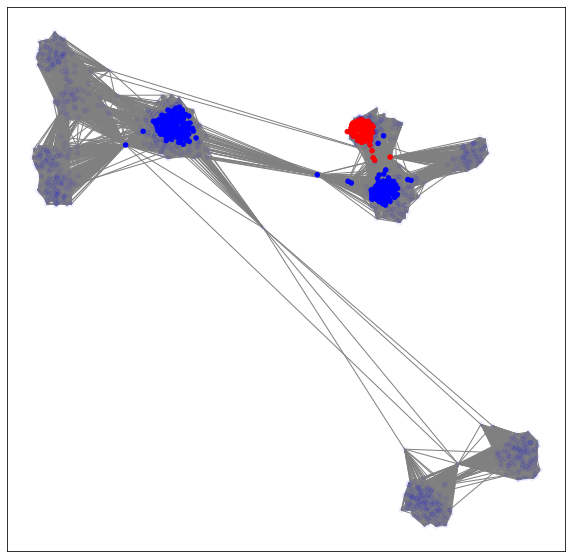} %width=0.8\linewidth
    \caption{The above subgraph is the $k=70$ core for the Facebook graph which demonstrates community structure. The red nodes are the degenerate core and all other shells $k=70-114$ are shown in blue where bluer shells create degenerate-core instability once removed. In particular the most salient shell ($k=70$) consists of two clusters.}
    \label{fig:stability_notation}
\end{figure}

\subsubsection{Case Studies: Facebook and LastFM}\label{sec:casestudy}
Figure \ref{fig:casestudy} shows the $k$-core embeddings before and after the point of maximum instability, as defined by Equation \ref{eqn:def-instability}. For both Facebook and LastFM, before the maximum instability shell is removed, the degenerate core, colored in red, is concentrated in a particular region of the embedding space. In the case of Facebook, the $70$-core exhibits core-periphery structure in which the degenerate core is the dense center (see Figure~\ref{fig:stability_notation} for a visualization of this subgraph). However, after one further $k$-shell removal, the degenerate-core embeddings become interspersed with the remaining subgraph. Figure \ref{fig:casestudy} shows the results when embedding with Laplacian Eigenmaps \cite{lapeigenmap} for Facebook and with HOPE \cite{HOPE} for LastFM. We observed this pattern generalized across various embedding dimensions and algorithms.
\begin{figure*}
    \centering
    \includegraphics[width=\linewidth]{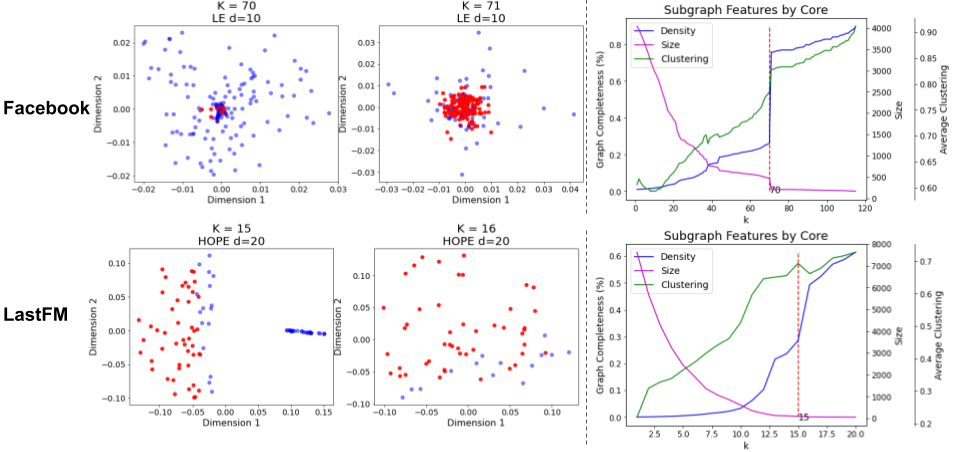}
    \caption{The point of instability of embeddings for both the Facebook and LastFM graphs is correlated with increases in the subgraph density. The left-side scatter plots show the two-dimensional projection of the subgraph embeddings before and after the instability point. In both plots the degenerate core is colored in red. The plots on the right-hand side track subgraph features with each k-shell removal. The point of highest edge density increase (dashed line) is also the instability point.}
    \label{fig:casestudy}
\end{figure*}
The right-hand side of Figure \ref{fig:casestudy} shows the subgraph (edge) density, size, and average clustering coefficient at each $k$-core. In particular, the point of greatest increase in edge density and instability in degenerate-core embedding is denoted by the dashed red line. The plot shows that at the point of greatest embedding instability, the edge density is greatly increasing whereas the size of the subgraph is not dramatically decreasing. In the case of LastFM, the majority of the graph has already been removed. These patterns refute the suggestion that the degenerate-core embeddings are unstable simply because a large number of nodes have been removed from the graph. 

\subsubsection{Regression Analysis}
We model the relationship between $k$-core subgraph features and the corresponding change in the degenerate-core pairwise distribution. As with before, the EMD for the $k$-core is the Earth Mover's Distance between the distribution of degenerate-core pairwise distances when the entire graph is embedded and when only the $k$-core is embedded. Increases in the EMD highlight points of instability. The subgraph features we measure are: the number of nodes (``size"), edge density, average clustering coefficient, and transitivity, which are common features for characterizing subgraphs \cite{subgraph-features}. These features are inputs into the regression model shown in Equation \ref{eqn:emd-regression}, in which we correlate the change in the subgraph features with the change in the pairwise distribution EMD.
\begin{equation}\label{eqn:emd-regression}
    \begin{split}
        \Delta\text{EMD} = \beta_0 &+ \beta_1\Delta_\text{size}\\
        &+ \beta_2\textrm{ }\Delta_\text{edge\_density}\\
        &+ \beta_3\textrm{ }\Delta_\text{clustering\_coefficient}\\
        &+ \beta_4\textrm{ }\Delta_\text{transitivity}
    \end{split}
\end{equation}

The data for the regression model was generated by examining adjacent $k$-cores. For the $k$ and $k-1$ cores of a given graph, we measure the change in the aforementioned subgraph features as well as the change in the EMD, yielding one training datapoint. We repeat this process across all of the graphs and combine the datapoints into a single training dataset. Because the relationship between subgraph features and stability can vary by graph embedding algorithm or the embedding dimension, we ran a regression model for every embedding algorithm and dimension combination. 

Figure \ref{fig:emd-coefficients} shows the results of running the regression in Equation \ref{eqn:emd-regression}. The coefficient values for edge density and size are grouped by embedding dimension, and one value is reported per embedding algorithm. The error bars report $95\%$ confidence intervals. Across all combinations of embedding algorithm and dimension except SDNE with $d=20$, edge density was positively correlated with increases in EMD. Further, the coefficients for edge density were statistically significantly positive in $11$ of the $18$ dimension and algorithm combinations. On the other hand, the coefficients for graph size were always negative which refutes the hypothesis that instability arises simply from removing many nodes from the graph.

\begin{figure*}
    \centering
     \includegraphics[scale=0.45]{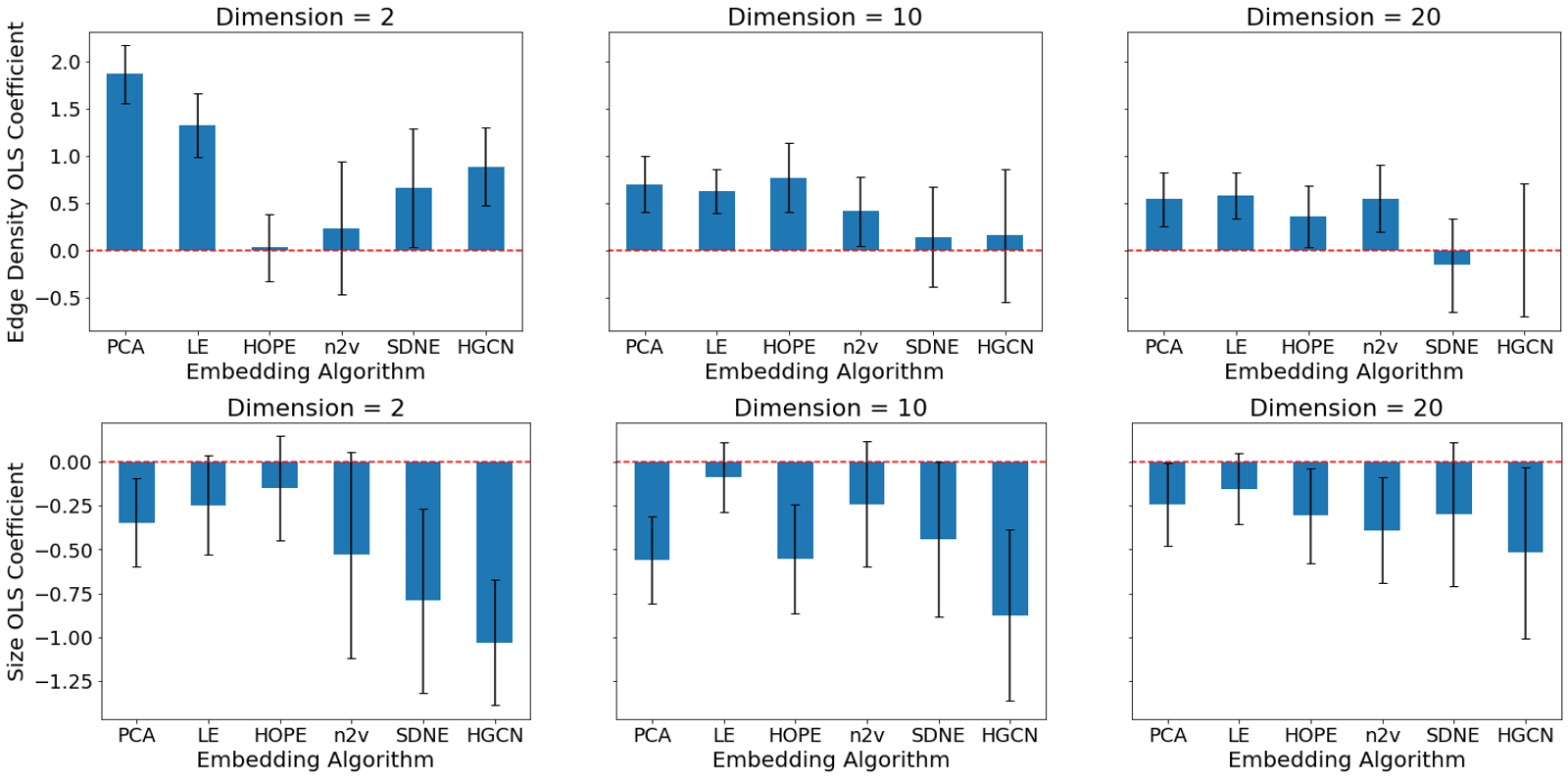}
    \caption{The results of running the regression in Equation \ref{eqn:emd-regression} show that increases in subgraph edge density are correlated with degenerate-core embedding instability. We ran the regression for every combination of embedding dimension and embedding algorithm. The coefficients for edge density (top) and subgraph size (bottom) are shown above. For all dimensions and algorithms except SDNE with $d=20$, the coefficient for edge density was positive. On the other hand, graph size was not positively correlated with instability which refutes the hypothesis that instability arises simply from removing many nodes from the graph. Error bars are provided for statistical significance ($p=0.05$).}
    \label{fig:emd-coefficients}
\end{figure*}

\section{\algName{}: Algorithm for Stable Graph Embeddings}
We propose a graph embedding algorithm \algName{} that produces core-stable embeddings. \algName{} augments any existing graph embedding algorithm by adding an instability penalty to its objective function. Below, we  outline our generic algorithm \algName{} and  show two instantiations of \algName{} by augmenting Laplacian Eigenmaps and LINE. We use the notation introduced in Table \ref{tab:notation}.
\subsection{Generic Algorithm}
\subsubsection{Objective function}
Our objective function consists of two components: the base objective $\mathcal{L}_b$ and an instability penalty $\mathcal{L}_s$. \algName{} minimizes Equation \ref{eqn:base-objective} where $\alpha$ is a regularization hyperparameter.
\begin{equation}\label{eqn:base-objective}
    \pmb{Y}^* = \argmin_{\pmb{Y}\in\mathbb{R}^{n\text{x}d}} \mathcal{L}_b\left(\pmb{Y}, W\right) + \alpha\mathcal{L}_s\left(\pmb{Y}, W, \mathcal{D}\right)
\end{equation}
The instability penalty is high when the degenerate-core embedding is different in the following two cases: (1) the core is embedded in the context of the entire graph and (2) the core is embedded in isolation. We define $\hat{\pmb{Y}}$ as the $\lvert \mathcal{D}\lvert$ x $d$ matrix containing embeddings for the degenerate core when the core is isolated:
\begin{equation}
    \hat{\pmb{Y}}_D = \argmin_{\pmb{Y}\in\mathbb{R}^{\lvert D\lvert\text{x}d}} f\left(W_D, Y\right)
\end{equation}
Now, stability can be defined as the preservation of the first-order proximities between pairs of nodes in the degenerate core, where the first-order proximity $p$ between embedding $\pmb{u}_i$ and $\pmb{u}_j$ is:
\begin{equation}
    p\left(\pmb{u}_i, \pmb{u}_j\right) = \frac{1}{1 + e^{-\pmb{u}_i^T \pmb{u}_j}}
\end{equation}
We define the instability penalty as the sum of squares over all differences in first-order proximities between pairs of degenerate-core nodes:
\begin{equation}\label{eqn:stability-penalty}
    \mathcal{L}_s = \sum_{i,j \in \mathcal{D}} \left\lvert p\left(\pmb{u}_i, \pmb{u}_j\right) - p\left(\pmb{\hat{u}}_i, \pmb{\hat{u}}_j\right) \right\lvert^2
\end{equation}
To optimize \algName{}'s objective function, we perform a batched stochastic gradient descent where each batch is a set of edges. For an edge $\{i, j\}$ where $i$ and $j$ are both in the degenerate core, the gradient for the instability penalty, with constants omitted, is as follows: (The expression for $d\mathcal{L}_b/d\pmb{u}_i$ replaces the last term with $\pmb{u}_i$.) 
\begin{equation}\label{eqn:stability_gradient}
    \frac{d \mathcal{L}_s}{d \pmb{u}_i} = \sigma\left(\pmb{u}_i^T\pmb{u}_j\right)\left[1 - \sigma\left(\pmb{u}_i^T\pmb{u}_j\right)\right]\left[\sigma\left(\pmb{u}_i^T\pmb{u}_j\right) - \sigma\left(\pmb{\widehat{u}}_i^T\pmb{\widehat{u}}_j\right)\right]\pmb{u}_j
\end{equation}
The stability gradient is only used when both $i$ and $j$ are in the degenerate core. For edges outside of the degenerate core, the update rule only utilizes the gradient for the base objective function. In Section \ref{sec:instantiations} we instantiate with Laplacian Eigenmaps and LINE as the base functions and provide the respective gradients there. 

\begin{algorithm}
\caption{\algName{}}\label{alg:cap}
\begin{algorithmic}
\Require $G, W, n_b, \alpha, \eta$
\Ensure $\pmb{Y}^*$
\State $\mathcal{D} \gets \texttt{degenerate\_core}(G)$
\State $\widehat{\pmb{Y}}_D \gets \texttt{base\_embed}(G_D, W_D)$
\State $\pmb{Y} \gets \texttt{base\_embed}(G, W)$
\State $G \gets \texttt{degenerate\_clique}(G, \mathcal{D})$
\State $i \gets 0$
\While{$i < n_b$}
\State $i \gets i + 1$
\State $\text{edges} \gets \texttt{sample\_edges}(G)$
\For{$i,j$ in edges}
\If{$W[i,j] > 0$}
\State $\pmb{Y}[i] \gets \pmb{Y}[i] - \eta \frac{d\mathcal{L}_b}{d\pmb{u}_i}$ \Comment{See Sec. \ref{sec:instantiations}}
\State $\pmb{Y}[j] \gets \pmb{Y}[j] - \eta \frac{d\mathcal{L}_b}{d\pmb{u}_j}$
\EndIf
\If{$i\in\mathcal{D}$ and $j\in\mathcal{D}$}
\State $\pmb{Y}[i] \gets \pmb{Y}[i] - \eta \alpha \frac{d\mathcal{L}_s}{d\pmb{u}_i}$ \Comment{See Eqn. \ref{eqn:stability_gradient}}
\State $\pmb{Y}[j] \gets \pmb{Y}[j] - \eta \alpha \frac{d\mathcal{L}_s}{d\pmb{u}_j}$
\EndIf
\EndFor
\EndWhile
\end{algorithmic}
\end{algorithm}

Algorithm \ref{alg:cap} provides pseudocode for \algName{}. We begin by embedding the degenerate core in isolation ($\widehat{\pmb{Y}}_D$) as well as the entire input graph ($Y$) to initialize the embeddings. Because the instability penalty $\mathcal{L}_s$ sums over all pairs of nodes in the degenerate core, not just connected nodes, the \texttt{degenerate\_clique} method augments the graph by adding an edge between $\{i,j\}\forall i,j\in\mathcal{D}$. These edges are assigned weight zero and when drawn, only the stability update rule is applied and the base update rule is omitted. 
\subsection{Instantiations}\label{sec:instantiations}
\subsubsection{Stable LINE}
When instantiated with LINE (first-proximity) \cite{LINE}, the base loss takes the form:
\begin{equation}
    \mathcal{L}_b = -\sum_{i,j\in E}w_{ij}\log\left(p\left(\pmb{u}_i, \pmb{u}_j\right)\right)
\end{equation}
As is common with LINE implementations, for computational efficiency we utilize negative sampling such that for a sampled edge $i,j$ we minimize the following:
\begin{equation}
    -\log\left(p\left(\pmb{u}_i, \pmb{u}_j\right)\right) - E_{j'\sim P_n}\left[\log\left(p\left(-\pmb{u}_i, \pmb{u}_{j'}\right)\right)\right]
\end{equation}
The gradient for each vertex $\pmb{u}_i, \pmb{u}_j, \pmb{u}_{j'}$ is:
\begin{align}
\begin{split}
    \frac{d\mathcal{L}_b}{d\pmb{u}_i} &= -(1 - \sigma(\pmb{u}_i^T\pmb{u}_j))u_j + \sum_{j'}\sigma(\pmb{u}_i^T\pmb{u}_{j'}))\pmb{u}_{j'}\\
    \frac{d\mathcal{L}_b}{d\pmb{u}_j} &= -(1 - \sigma(\pmb{u}_i^T\pmb{u}_j))u_i\\
    \frac{d\mathcal{L}_b}{d\pmb{u}_{j'}} &= \sigma(\pmb{u}_i^T\pmb{u}_{j'}))\pmb{u}_{i}  
\end{split}
\end{align}
The complexity when instantiated with LINE is $\mathcal{O}\left(ndb + m\right)$ where $b$ is the number of negative samples per edge; $d$ is the number of embedding dimensions; and $n$, $m$ are the number of nodes and edges, respectively. The first term accounts for computing $b$ gradients of size $d$ for $n$ samples and the second term accounts for the overhead needed to setup the edge sampling data structures. Of note, this is the same complexity as LINE itself, so \algName{} does not add to the runtime complexity.

\subsubsection{Stable Laplacian Eigenmaps}
Laplacian Eigenmaps \cite{lapeigenmap} optimizes the following objective function $f$ \cite{lapeigenmap}:
\begin{equation}
    f\left(W, \pmb{Y}\right) = \sum_{i,j} w_{ij}\|\vec{u_i} - \vec{u_j}\|^2    
\end{equation}
However, the optimization is performed over the feasible set $Y^TDY=I$, where $D$ is the diagonal matrix such that $D_{ii}$ is the degree of node $i$. \algName{} initializes with Laplacian Eigenmaps embeddings. For this reason, instead of performing a constrained optimization, we penalize embeddings that deviate from the initial values; adding a deviation penalty proportional to the norm of the difference from the initial embeddings $Y_0$. To balance the orders of magnitude for these two losses, we introduce hyperparameters $\gamma$ and $\beta$. Thus, the base objective when instantiated with Laplacian Eigenmaps is:
\begin{equation}
    \mathcal{L}_b = \gamma f\left(W, \pmb{Y}\right) + \beta\left\|Y - Y_0\right\|^2
\end{equation}
Where the gradient is:
\begin{align}
    \frac{d \mathcal{L}_b}{d \pmb{u}_i} &= \gamma w_{ij}\left(\pmb{u}_i - \pmb{u}_j\right)+ \beta \left(\pmb{u}_i - \pmb{u}_{i0}\right) 
\end{align}
The runtime complexity for the Laplacian Eigmenmap instantiation is $\mathcal{O}\left(nd + m\right)$ because negative sampling is not used. 
\section{Experiments}
Our experiments show that \algName{} produces embeddings that are both core-stable and accurate on link prediction. We have included the details of our experimental setup in Appendix \ref{sec:hyperparam}. For the core-stable results, see Figure \ref{fig:stability_error_distribution}. The figure plots the density distribution of stability errors among all pairs of degenerate nodes, where the stability error for a pair $i,j\in\mathcal{D}$ is defined in Equation \ref{eqn:stability-error}. For both instantiations, the distribution of errors for the \algName{} embeddings centers closer to zero. This implies that \algName{}'s degenerate-core embedding is able to capture the degenerate-core well when all $k$-shells are removed (i.e., in isolation).
\begin{equation} \label{eqn:stability-error}
    \left\lvert p\left(\pmb{u}_i, \pmb{u}_j\right) - p\left(\pmb{\hat{u}}_i, \pmb{\hat{u}}_j\right) \right\lvert^2
\end{equation}
\begin{figure}
    \centering
    \includegraphics[width=\linewidth]{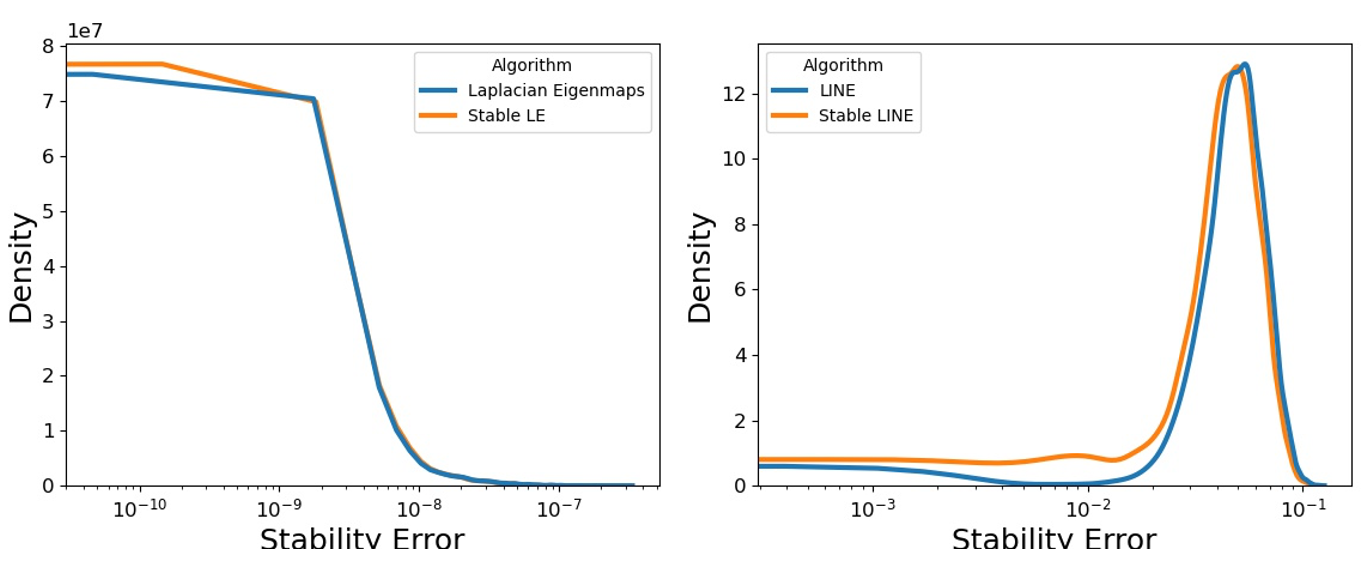}
    \caption{Density distribution of stability errors among all pairs of degenerate nodes. The stability error for a pair of nodes in the degenerate core is defined in Equation \ref{eqn:stability-error}.  %Both the Laplacian Eigenmaps and LINE instantiations of \algName{} produce degenerate-core embeddings that better capture the case when the degenerate core is embedded with all $k$-shells removed. 
    For instantiations of Laplacian Eigenmaps (left) and LINE (right), the distribution of stability errors is closer to zero for the stable embeddings. That is, \algName{} is able to find stable degenerate-core embeddings.}
    \label{fig:stability_error_distribution}
\end{figure}

Table \ref{tab:link-prediction} lists the results from our link prediction experiments. \algName{}'s graph embeddings preserve and at times improve the link-prediction accuracy of the original embeddings. For LINE, the stable AUC and F1 scores are higher for all graphs even though both algorithms are trained with the same number of iterations. For Laplacian Eigenmaps, the scores are similar for all of the networks, expect for PPI where the stability penalty decreases but the base loss increases.
\begin{table*}[ht]
\centering
\caption{Link prediction performance as measured by FI and AUC for the original (a.k.a. base) embedding and the stable embedding produced by \algName{}. For brevity, we only show results for Laplacian Eigenmaps and LINE. \algName{} not only produces stable degenerate-core embeddings, it also maintains or improves link prediction performance when the base error and the stability penalty decrease together.}
\label{tab:link-prediction}
\begin{tabular}{|l|rrrr|rrrr|}
\toprule
\textbf{Graph} &  \multicolumn{4}{c|}{\textbf{Laplacian Eigenmaps}} & \multicolumn{4}{c|}{\textbf{LINE}}\\
{}   & Original F1 & Stable F1 & Original AUC & Stable AUC & Original F1 & Stable F1 & Original AUC & Stable AUC\\\hline
Facebook & 0.955 & \textbf{0.956} & 0.984 & 0.984 & 0.757 & \textbf{0.815} & 0.843 & \textbf{0.894}\\\hline
AS & \textbf{0.678} & 0.677 & 0.695 & \textbf{0.704} & 0.575 & \textbf{0.619} & 0.610 & \textbf{0.683}\\\hline
PPI & \textbf{0.734} & 0.586 & \textbf{0.807} & 0.629 & 0.535 & \textbf{0.585} & 0.554 & \textbf{0.593}\\\hline
ca-HepTh & \textbf{0.772} & 0.758 & \textbf{0.845} & 0.831 & 0.761 & \textbf{0.794} & 0.840 & \textbf{0.874}\\\hline
LastFM & \textbf{0.864} & 0.850 & \textbf{0.909} & 0.900 & 0.751 & \textbf{0.809} & 0.829 & \textbf{0.883}\\\hline
Wikipedia & \textbf{0.572} & 0.561 & \textbf{0.596} & 0.581 & 0.492 & \textbf{0.512} & 0.490 & \textbf{0.509}\\
\bottomrule
\end{tabular}
\end{table*}
\section{Related Work}
We review related work on $k$-core analysis and the limitations of graph embeddings.
 \textbf{$k$-core Analysis.} $k$-core structure has been important for understanding spreading processes on graphs, in particular identifying the most-influential spreaders \cite{multiplex-core}. Common patterns related to $k$-core structure have been identified such as correlations between a node's degree and coreness (largest $k$ such that the node is in the $k$-core) as well as community structure in dense cores \cite{corescope}. Recent work has also broadened the study of k-cores to consider the addition of ``anchor nodes" that prevent large cascades when individual core nodes are removed \cite{anchors}. Finally it has been shown that not all degenerate cores are equally important; the most salient degenerate cores, called ``true cores" are those that are well interconnected with outer shells \cite{liu_core-like_2015}. 
 \textbf{Limitations of Graph Embeddings.} Practitioners are tasked with choosing from a large selection of algorithms \cite{systematic-comparison} and even once an algorithm has been chosen, hyperparameters such as the embedding dimension can greatly affect performance \cite{gu_principled_2021}. Further, in the case of community detection, expensive algorithms do not always perform traditional community detection algorithms \cite{embedding-communities}. Recent work has also established more theoretical limits to graph embeddings showing that at low dimensions it is impossible for graph embeddings to capture the triangle richness of real-world networks. Stability has also been identified as an issue \cite{schumacher2020effects}, however this study defined stability in a different sense: the consistency of embeddings when the algorithm is re-run multiple times. 
% Notes 
%     \item \url{https://arxiv.org/abs/1911.10743} kcore structure dependent on degree distribution. 

%     \item \url{https://arxiv.org/abs/2011.08103} model free geometry 

%     \item \url{https://advances.sciencemag.org/content/7/17/eabb9004} Neural embeddings of scholarly periodicals reveal complex disciplinary organizations

%     \item \url{https://peerj.com/articles/cs-439/}

\section{Conclusion}

The degenerate core of a graph is the densest part of that graph. In this work, we examined the stability of embeddings for the nodes in the degenerate core. We defined stability as the property of being resilient to perturbations. We defined perturbations as removing $k$-shells repeatedly from the graph. We observed three patterns of instability across a variety of popular graph embedding algorithms and numerous real-world and synthetic data sets. We also observed a change point in the degenerate-core embedding and showed how it was tied to subgraph density. Subsequently, we introduced \algName{}: an algorithm that takes an existing graph embedding algorithm and adds a stability objective to it. We showed how \algName{} works on two popular graph embedding algorithms and reported experiments that showed the value of \algName{}.

%%
%% The next two lines define the bibliography style to be used, and
%% the bibliography file.
\bibliographystyle{ACM-Reference-Format}
\bibliography{kdd}

\clearpage
\pagebreak
\appendix
\section{Appendex} %{Reproducibility}

\subsection{Synthetic Networks}\label{sec:synthetic-network}
Below we specify the configurations used to generate the synthetic networks used in our experiments. 
\paragraph{ER} We utilize the \texttt{erdos\_renyi\_graph} generator in the networkx package and call the generator twice with the following parameters: $(n=5000, p=0.002), (n=5000, p=0.004)$.
\paragraph{BA} We utilize the \texttt{barabasi\_albert\_graph} generator in the networkx package and call the generator twice with the following parameters: $(n=5000, m=5), (n=5000, m=10)$.
\paragraph{BTER} We generated the BTER graphs using the Matlab \texttt{FEASTPACK} software package which was downloaded from \url{http://www.sandia.gov/~tgkolda/feastpack/}. The software expects an input degree distribution. For the ``BA (from BA)" graph, we provided the edgelist from the ``BA ($m=5$)" graph detailed above. For the ``BA (Arbitrary)" graph we utilized \texttt{FEASTPACK} to generate an arbitrary degree distribution based on the following specifications: the maximum degree is $< 1000$, the target average degree is $15$, the target maximum clustering coefficient is $0.95$, and the target global clustering coefficient is $0.15$.
\subsection{Laplacian Eigenmaps Clique Theorem}
\label{sec:embed-clique}
\setcounter{theorem}{0}
\begin{theorem}
The random-walk normalized Laplacian for a clique of size $n$ has two eigenvalues: zero, with multiplicity one, and $1 + \frac{1}{n-1}$ with multiplicity $n-1$.
\end{theorem}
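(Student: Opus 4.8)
The plan is to recognize the random-walk normalized Laplacian $L_{rw}$ as an affine function of the all-ones matrix $J$, whose spectrum is classical, and then transport both eigenvalues and their multiplicities through that affine map.

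First I would put $L_{rw}$ in closed form. For a clique on $n$ vertices the adjacency matrix is $A = J - I$ (ones off the diagonal, zeros on it), every degree equals $n-1$ so $D = (n-1)I$, and hence $L = D - A = nI - J$. Dividing by the common degree gives
\begin{equation}
L_{rw} = D^{-1}L = \frac{1}{n-1}\left(nI - J\right) = \frac{n}{n-1}I - \frac{1}{n-1}J,
\end{equation}
the $n \times n$ matrix whose diagonal entries are $1$ and whose off-diagonal entries are $-\tfrac{1}{n-1}$, matching the schematic displayed above.

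Next I would invoke the well-known spectrum of $J$: the all-ones vector $\mathbf{1}$ satisfies $J\mathbf{1} = n\mathbf{1}$, giving eigenvalue $n$ with multiplicity one, while every vector orthogonal to $\mathbf{1}$ (equivalently, summing to zero) lies in $\ker J$, giving eigenvalue $0$ with multiplicity $n-1$. Because $L_{rw}$ is an affine function of $J$, it is diagonalized by the same eigenbasis, and each eigenvalue $\lambda$ of $J$ is sent to $\frac{n}{n-1} - \frac{1}{n-1}\lambda$. Substituting $\lambda = n$ yields $\frac{n}{n-1} - \frac{n}{n-1} = 0$ (multiplicity one), and substituting $\lambda = 0$ yields $\frac{n}{n-1} = 1 + \frac{1}{n-1}$ (multiplicity $n-1$), which are exactly the two claimed eigenvalues with their stated multiplicities.

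I do not expect a genuine obstacle here; the only thing to watch is bookkeeping. In particular, the eigenvector $\mathbf{1}$ is the one producing the zero eigenvalue of $L_{rw}$, consistent with the clique being a single connected component, while the $(n-1)$-dimensional subspace $\{v : \mathbf{1}^{T} v = 0\}$ collapses all remaining eigenvalues onto the single value $1 + \tfrac{1}{n-1}$. Confirming that the affine map preserves eigenspaces (hence multiplicities) and that the two distinct $J$-eigenvalues are not mapped to a common value completes the argument.
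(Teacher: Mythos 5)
Your proof is correct and takes essentially the same route as the paper's: both express $L_{rw}$ as an affine function of the all-ones matrix (the paper writes $L_{rw} = \frac{-1}{n-1}\mathbbm{1} + \left(1+\frac{1}{n-1}\right)I$, identical to your $\frac{n}{n-1}I - \frac{1}{n-1}J$) and transport the classical spectrum of the all-ones matrix through that map. Your write-up is in fact somewhat more careful than the paper's, since you exhibit the eigenvectors explicitly and verify that the affine map preserves eigenspaces and sends the two distinct eigenvalues of $J$ to distinct values.
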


\begin{proof}
The random-walk normalized Laplacian is defined as:
\begin{equation}
    L_{rw} = D^{-1}(D - A)
\end{equation}
In the case of a clique, the above matrix evaluates to a matrix in which diagonal entries are $1$ and off-diagonal entries are $-\frac{1}{n-1}$. 

Below, we use three lemmas of eigenvalues, where $\text{eigs}(A)$ refers to the eigenvalues of $A$.
\begin{lemma}\label{lemma:eig-additive}
    Let $A' = A + c*I$, where $I$ is the identity matrix. Then $\text{eigs}(A') = \text{eigs}(A) + c$.
\end{lemma}
\begin{lemma}\label{lemma:eig-multiply}
    Let $A' = c*A$. Then $\text{eigs}(A') = c*\text{eigs}(A)$.
\end{lemma}
\begin{lemma}\label{lemma:eig-complete}
    The eigenvalues of $\mathbbm{1}_n$, an $nxn$ matrix of all ones, are $0$, with multiplicity $n-1$ and $n$, with multiplicity one.
\end{lemma}

Next, we observe that the random-walk normalized Laplacian of a clique can be expressed as:
\begin{equation}
    L_{rw} = \frac{-1}{n-1}*\mathbbm{1} + \left(1 + \frac{1}{n-1}\right)*I
\end{equation}
Then, using Lemmas \ref{lemma:eig-additive} and \ref{lemma:eig-multiply}, the following is true:
\begin{equation}
    \text{eigs}(L_{rw}) = \frac{-1}{n-1}*\text{eigs}(\mathbbm{1}) + \left(1 + \frac{1}{n-1}\right)*I
\end{equation}
Finally, by Lemma \ref{lemma:eig-complete}, the eigenvalues of the normalized Laplacian are $0$ with multiplicity one and $1 + \frac{1}{n-1}$ with multiplicity $n-1$
\end{proof}
% \begin{figure}
%     \centering
%     \includegraphics[width=0.8\textwidth]{figs/stable-embedding-alg-notation.png}
%     \caption{The above figure introduces the notation for our definition of stability. The example graph $\pmb{G}$ has a degeneracy of $3$ and two $k$-shells; the corresponding adjacency matrix $W$ is also shown where nodes are ordered by their coreness. After all $k$-shells have been removed, we have the modified adjacency matrix $W'$.}
%     \label{fig:stability_notation}
% \end{figure}
\begin{figure}
    \centering
    \includegraphics[width=0.95\linewidth]{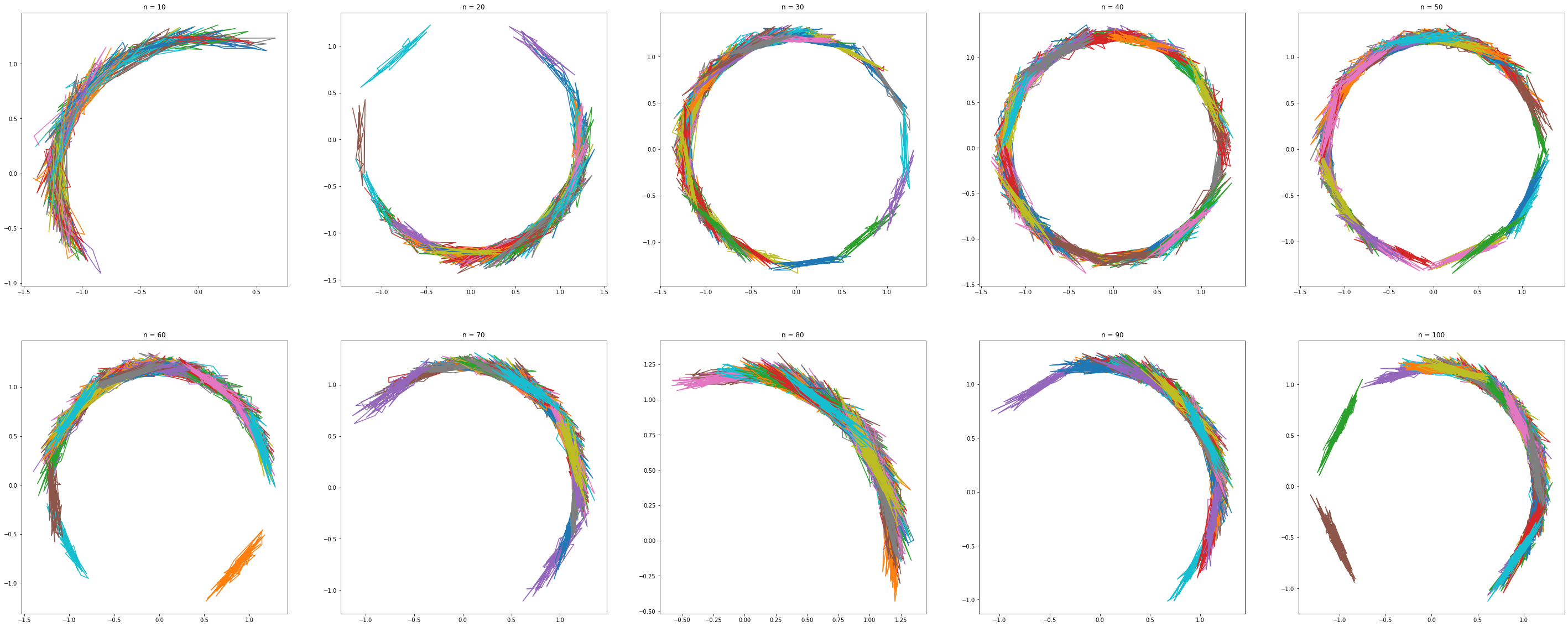}
    \caption{Similar to the Laplacian Eigenmaps, the Node2Vec embeddings for cliques are also unstable. These plots show the Node2Vec embeddings ($d=2$) for cliques of various sizes ($n\in[10,100]$) repeated over multiple iterations. Regardless of the clique sizes chosen, on any given iteration, the embeddings are approximately linear. Across the various iterations (differentiated by color), the embedding instances are collectively circular.}
    \label{fig:n2v_clique}
\end{figure}

\begin{figure}
    \centering
    \includegraphics[width=\linewidth]{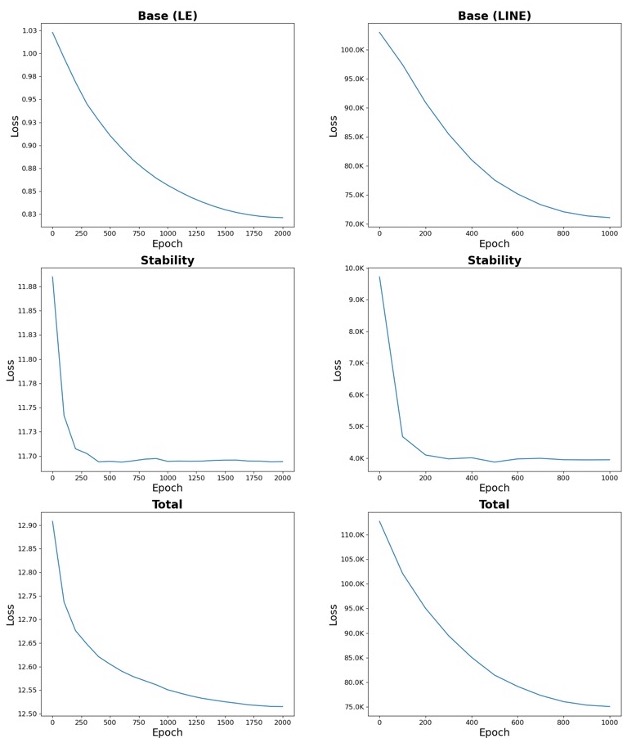}
    \caption{The above plots show the training error when \algName{} is trained on the Facebook graph using Laplacian Eigenmaps (left) and LINE (right) as base functions. In both cases, the base loss and stability loss  decrease over successive training epochs. Furthermore, the stability error decreases immediately while the base error decreases more gradually.}
    \label{fig:training_error}
\end{figure}

\subsection{Node2Vec Embeddings of Cliques}
\label{sec:n2v_clique}
To complement the theoretical analysis of Laplacian Eigenmaps clique embeddings in Section \ref{sec:embed-clique}, the Node2Vec embeddings \cite{node2vec} for cliques are also unstable. When performing a random walk on the clique, at each hop of the walk, every node in the graph is sampled with equal probability (when $p=q=1$, which are the default parameters). 
% As such, in expectation, the count of each skip-gram tuple is equal and the Node2Vec optimization reduces to: 
% \begin{equation}
%     \left(\sum_{i,j} f_{u_i}\cdot f_{u_j}\right) - n\sum_{i}\left(\log \sum_{j} e^{f_{u_i}\cdot f_{u_j}}\right)
% \end{equation}
% Where $f_{u_i}$ is the embedding of node $i$.

 We empirically analyze the Node2Vec embeddings for cliques. In the Figure \ref{fig:n2v_clique}, we visualize the Node2Vec embeddings for cliques of various sizes where $n\in[10, 100]$. For each clique, we embed the clique 100 times and visualize the embeddings overlayed on top of each other, differentiated by color. For all of the sizes, we see that the various embeddings are approximately circular; on a particular embedding instance, the embeddings are approximately a tangent along the circle. Similar to the Laplacian Eigenmaps example we can see that for a clique, there are many solutions to the embedding algorithm optimization.

%\subsection{Facebook Case Study}
%\begin{figure}
%    \centering
%    \includegraphics[width=0.8\linewidth]{figs/explainable_embedding_canonical.png}
%    \caption{$k=70$ core}
%    \label{fig:stability_notation}
%\end{figure}

\subsection{Loss Functions}
\label{sec:loss_fns}
Recall that \algName{}'s objective function consists of two components: the base objective $\mathcal{L}_b$ and an instability penalty $\mathcal{L}_s$ (see Equation \ref{eqn:base-objective} for details). We examine the training of \algName{} on the Facebook graph. Figure~\ref{fig:training_error} shows that for both the Laplacian Eigenmaps \cite{lapeigenmap}, the base loss and instability penalty decrease during training. The instability penalty in particular drops precipitously in the first 200 epochs in both cases. The hyperparameters were chosen such that the two losses would be of similar orders of magnitude. 

\subsection{Experimental Setup}
\label{sec:hyperparam}
For our experimental results, we used the following hyperparameter settings: ($\alpha=10^5, \beta=\gamma=0.1$) for Stable Laplacian Eigenmaps and ($\alpha=10$) for Stable LINE. These values were chosen so that the orders of magnitude for $\mathcal{L}_b$ and $\mathcal{L}_s$ are similar. For both instantiations, we use an initial learning rate of $\eta=0.025$ that decreases linearly with each epoch until the rate reaches zero at the final epoch. Furthermore, our link prediction tests withhold 10\% of links for the test set. The labels for link prediction are determined by sorting the cosine similarity scores for all pairs of nodes in the test set and all scores above a set threshold are labels as positive predictions. The threshold is set such that the number for positive predictions matches the number of true positives. 

\subsection{Reproducibility}
We have a GitHub repository for this work available at \href{https://github.com/dliu18/embedding_repo}{\textcolor{blue}{this link}}.

%David wrote:
%We did and its available here https://github.com/dliu18/embedding_repo
%It is definitely feasible for us to release the full code for this project too
% https://anonymous.4open.science/r/embedding_repo-anon/README.md

\end{document}